\newtheorem{definition}{Definition}
\newtheorem{theorem}{Theorem}
\newcommand{\Exp}{\mathsf{Exp}}
\newcommand{\Adv}{\mathsf{Adv}}
\title{Post-Quantum-Resilient Audit Evidence for Long-Lived Regulated Systems:\\
Security Models, Migration Patterns, and Case Study}
\author{
  Leo Kao\thanks{Email: leo@codebat.ai} \\
  Codebat Technologies Inc.
}
\date{November 2025}
\begin{document}
\maketitle

\begin{abstract}
Constant-size cryptographic evidence records are increasingly used to build audit trails for
regulated AI workloads in clinical, pharmaceutical, and financial settings, where each execution
is summarized by a compact, verifiable record of code identity, model version, data digests, and
platform measurements. Existing instantiations, however, typically rely on classical signature
schemes whose long-term security is threatened by quantum-capable adversaries. In this paper
we formalize security notions for evidence structures in the presence of quantum adversaries and
study post-quantum (PQ) instantiations and migration strategies for deployed audit logs. We
recall an abstraction of constant-size evidence structures and introduce game-based definitions
of \emph{Q-Audit Integrity}, \emph{Q-Non-Equivocation}, and \emph{Q-Binding}, capturing the
inability of a quantum adversary to forge, equivocate, or rebind evidence items. We then analyze
a hash-and-sign instantiation in the quantum random-oracle model (QROM), assuming an
existentially unforgeable PQ signature scheme against quantum adversaries, and prove
via tight reductions that the resulting evidence structure satisfies all three notions
under standard assumptions (collision-resistant hashing and quantum EUF-CMA signatures). Building on
this, we present three migration patterns for existing evidence logs---hybrid signatures,
re-signing of legacy evidence, and Merkle-root anchoring---and analyze their security, storage,
and computational trade-offs. A case study based on an industrial constant-size evidence
platform for regulated AI at Codebat Technologies Inc., implemented in Rust and benchmarked
on commodity hardware, suggests that quantum-safe audit trails are achievable with moderate
overhead and that systematic migration can significantly extend the evidentiary lifetime of
existing deployments.
\end{abstract}

\section{Introduction}

Audit trails play a central role in regulated environments where automated computation affects
safety- or compliance-critical decisions. In domains such as clinical AI, pharmaceutical
development, and high-stakes financial analytics, regulators require that organizations be able
to explain and justify how a particular result was produced: which code and model were used,
which data were processed, under which policies, on which platform, and at what time.
Recent system designs address these needs by representing each execution as a
\emph{constant-size cryptographic evidence item} that compactly captures the essential
provenance and integrity information and can be stored and verified at scale
(e.g.,~\cite{kao2025constant,securelogging,transparency}), and recent work has begun
to study quantum-resilient designs for distributed log storage
itself~\cite{somarapu2022qrlogs}.

Today, many such systems instantiate their evidence structures with classical primitives such
as ECDSA or RSA signatures and SHA-2 family hash functions. These choices are well
understood and adequate against classical adversaries, but the long-term evidentiary value of
audit logs is threatened by the emergence of quantum computers. A quantum-capable
adversary may be able to \emph{harvest} classical evidence now and \emph{forge or equivocate}
about past executions later, once large-scale quantum computation becomes available. Even
for organizations that are not yet deploying quantum hardware, regulators increasingly expect
plans for \emph{quantum-safe} logging and evidence retention.

Despite a growing body of work on post-quantum cryptography (PQC) and secure logging,
there is relatively little treatment of evidence structures \emph{as a first-class cryptographic
object} in a quantum-adversary setting. In particular, existing designs do not usually:
\begin{itemize}[leftmargin=1.5em]
  \item formalize audit-trail security notions for evidence structures against quantum
        adversaries;
  \item analyze concrete post-quantum instantiations in a standard model such as the
        quantum random-oracle model (QROM);
  \item provide systematic migration strategies for already-deployed evidence logs that
        must remain verifiable for a decade or longer.
\end{itemize}

In this work we address these gaps. We build on an abstraction of constant-size evidence
structures introduced in prior work~\cite{kao2025constant}, which models a system where each
event in a workflow is associated with a fixed-length evidence record produced by a
\emph{Generate} algorithm and consumed by a \emph{Verify} algorithm. Our focus here is on
extending this abstraction to a quantum-adversary setting, providing both formal definitions
and practically oriented migration strategies.

\paragraph{Contributions.}
Our main contributions are:
\begin{enumerate}[leftmargin=1.5em]
  \item \textbf{Quantum-adversary security notions.} We define a system model where
        adversaries may obtain quantum access to hash oracles and may attempt
        harvest-now/forge-later attacks against evidence logs. In this model we introduce
        game-based security notions for evidence structures, including \emph{Q-Audit
        Integrity}, \emph{Q-Non-Equivocation}, and \emph{Q-Binding}.
  \item \textbf{Post-quantum instantiations with complete proofs in the QROM.} We
        analyze a hash-and-sign instantiation of constant-size evidence structures
        in the quantum random-oracle model and provide three complete reduction
        proofs with explicit advantage bounds, showing that Q-Audit Integrity,
        Q-Binding, and Q-Non-Equivocation all hold under collision-resistant hashing
        and quantum EUF-CMA signatures.
  \item \textbf{Migration patterns for deployed logs.} We study three migration patterns for
        deployed evidence systems: (i) hybrid classical+PQ signatures for new evidence,
        (ii) re-signing legacy evidence with PQ signatures in a trusted environment, and
        (iii) Merkle-root anchoring of legacy batches. For each pattern we discuss trust
        assumptions, security guarantees, and storage and computational overhead.
  \item \textbf{Industrial case study with PQ benchmarks.} We outline a case study
        using a constant-size evidence system implemented in an industrial platform
        for regulated AI workloads at Codebat Technologies Inc., including concrete
        performance comparisons across post-quantum signature schemes (ML-DSA,
        SLH-DSA) to quantify migration overhead.
  \item \textbf{Systematic comparison.} We provide a feature-level comparison with
        existing post-quantum logging and audit systems, identifying the gaps that
        our work addresses.
\end{enumerate}

The rest of the paper is organized as follows. Section~\ref{sec:prelim} recalls the abstraction
of constant-size evidence structures and reviews quantum-adversary and PQ primitive
models. Section~\ref{sec:system-context} describes the deployment context and
constant-size evidence layout in our industrial platform.
Section~\ref{sec:security} introduces our quantum-adversary security notions.
Section~\ref{sec:instantiation} presents and analyzes a post-quantum instantiation.
Section~\ref{sec:migration} develops migration patterns for deployed evidence logs.
Section~\ref{sec:case} discusses an industrial case study.
Section~\ref{sec:related} reviews related work, and
Sections~\ref{sec:discussion}--\ref{sec:conclusion} conclude with open problems and
future directions.

\paragraph{Relation to prior work.}
Compared to classical secure logging and transparency
systems~\cite{securelogging,forwardsecurelogging,haberstornetta1991,merkle1989certified,ct,transparency},
our work treats constant-size evidence structures as the primary
cryptographic object and explicitly analyzes their security against quantum
adversaries in the QROM.
Existing PQC work on logging and auditing
(e.g.,~\cite{somarapu2022qrlogs,khan2025pqblockchainaudit,yusuf2025qrlogintegrity})
typically either focuses on blockchain-based storage or on specific application domains.
In contrast, we provide (i)~generic quantum-adversary security notions for evidence
structures, and (ii)~migration patterns that are designed to retrofit already-deployed
constant-size evidence logs in regulated AI systems.

\section{Preliminaries}
\label{sec:prelim}

\subsection{Constant-Size Evidence Structures}
\label{subsec:prelim-evidence}

We briefly recall the abstraction of constant-size evidence structures introduced
in~\cite{kao2025constant}, which we treat here as a black-box building block.

Consider a workflow that generates a sequence of events
$e_1,e_2,\dots,e_T$, where each event $e_i$ corresponds to a concrete
execution step such as a model inference, a batch data transformation, or an
approval action. For each event $e_i$ the system produces an evidence record
$\mathsf{Ev}_i \in \{0,1\}^{\ell}$ of fixed length $\ell$ bits.
Informally, $\mathsf{Ev}_i$ encodes:
\begin{itemize}[leftmargin=1.5em]
  \item identifiers and digests of the code, model, and configuration used;
  \item digests of the input and output artifacts;
  \item platform measurements (e.g., TEE measurement), timestamps, and policy identifiers;
  \item a cryptographic authenticator binding these fields together.
\end{itemize}

Formally, a constant-size evidence structure is defined by algorithms
\[
  \mathcal{E} = (\mathsf{Setup}, \mathsf{KeyGen}, \mathsf{Generate}, \mathsf{Verify}, \mathsf{Link})
\]
with the following interfaces:
\begin{itemize}[leftmargin=1.5em]
  \item $(\mathsf{pp}) \leftarrow \mathsf{Setup}(1^\lambda)$: generates public parameters.
  \item $(\mathsf{sk},\mathsf{pk}) \leftarrow \mathsf{KeyGen}(\mathsf{pp})$: generates a signing
        key pair (notation here covers the integrity mechanism; more general forms are
        possible).
  \item $\mathsf{Ev} \leftarrow \mathsf{Generate}(\mathsf{pp},\mathsf{sk},e)$: given an event
        description $e$, outputs a constant-size evidence record $\mathsf{Ev}$.
  \item $b \leftarrow \mathsf{Verify}(\mathsf{pp},\mathsf{pk},e,\mathsf{Ev})$: returns $1$ if
        $\mathsf{Ev}$ is a valid evidence record for event $e$, and $0$ otherwise.
  \item $\pi \leftarrow \mathsf{Link}(\mathsf{pp},\mathsf{Ev}_1,\dots,\mathsf{Ev}_T)$: optionally
        produces a compact linkage proof (e.g., via hash chains or Merkle trees) for a batch
        of evidence records.
\end{itemize}

The original work~\cite{kao2025constant} focuses on classical adversaries, defining
integrity and linkage properties and providing a concrete instantiation where each
$\mathsf{Ev}$ is a 288- or 320-byte record. In this paper we treat $\mathsf{Generate}$ and
$\mathsf{Verify}$ as abstract interfaces that we instantiate with post-quantum primitives
and analyze against quantum adversaries.

\subsection{Quantum Adversaries and Post-Quantum Primitives}
\label{subsec:prelim-quantum}

We model adversaries as quantum polynomial-time (QPT) algorithms with access to
certain oracles and system APIs. We write $\mathcal{A}$ for such adversaries and
assume that they may maintain quantum side information throughout the experiment.

\paragraph{Quantum random-oracle model.}
We adopt the quantum random-oracle model (QROM) for hash functions
$H : \{0,1\}^* \rightarrow \{0,1\}^\lambda$, following the treatment of Zhandry
and subsequent work~\cite{zhandry2012qrom,bonehzhandry2013qro}. In the QROM, $H$
is modeled as a uniformly random function, and the adversary may query it in quantum
superposition. We assume that $H$ is \emph{collision-resistant} against quantum
adversaries, i.e., no QPT adversary can find distinct $x,x'$ with $H(x) = H(x')$ with
non-negligible probability.

\paragraph{Post-quantum signature schemes.}
We consider a signature scheme $\Sigma$ with standard syntax:
\begin{itemize}[leftmargin=1.5em]
  \item $(\mathsf{sk}^{\Sigma},\mathsf{pk}^{\Sigma}) \leftarrow \mathsf{KeyGen}^{\Sigma}(1^\lambda)$;
  \item $\sigma \leftarrow \mathsf{Sign}^{\Sigma}(\mathsf{sk}^{\Sigma},m)$;
  \item $b \leftarrow \mathsf{Verify}^{\Sigma}(\mathsf{pk}^{\Sigma},m,\sigma)$.
\end{itemize}
We assume that $\Sigma$ is existentially unforgeable under chosen-message attacks
(EUF-CMA) against quantum adversaries: no QPT adversary with quantum access to
$H$ and classical access to the signing oracle can produce a valid signature on a
new message with non-negligible probability. We write
$\Adv^{\mathrm{euf\text{-}cma}}_{\Sigma}(\lambda)$ for the corresponding advantage.

Concretely, $\Sigma$ may correspond to a lattice-based or hash-based scheme such as
those standardized or under consideration in post-quantum cryptography efforts
(e.g.,~\cite{nistpqc,unruh2017pqsignatures}), but in this paper we treat it as an
abstract primitive satisfying the above property. Real-world deployments face
non-trivial challenges around key sizes, performance, and migration, as surveyed
in~\cite{tan2022pqsignsurvey}.

\paragraph{Notation.}
We write $\lambda$ for the security parameter, $\mathsf{negl}(\lambda)$ for a negligible
function, and use $\approx$ to denote computational indistinguishability. All
probabilities are taken over the randomness of the algorithms and the random oracle
$H$ unless otherwise specified.

\section{System Context: Constant-Size Evidence for Regulated AI}
\label{sec:system-context}

Before introducing our quantum-adversary security model, we describe the deployment
context and the constant-size evidence structure that motivates our formal treatment.
This section draws on our prior work on constant-size evidence packs for regulated
AI~\cite{kao2025constant} and summarizes the key system design choices relevant to
quantum-resilient migration.

\subsection{Deployment Scenario}

We consider regulated AI deployments in domains such as clinical decision support,
pharmaceutical development, and financial analytics, where each AI workload
execution must be auditable for compliance and forensic purposes. In a typical
deployment, the platform orchestrates AI jobs across a cluster of container-based
trusted execution environments (TEEs). For each job, the runtime collects
attestation and provenance attributes and produces a constant-size evidence record
that is appended to an audit log.

Unlike short-lived session keys in TLS, audit evidence for regulated AI workloads
is meant to support long-term accountability and forensic investigations.
Hospitals, financial institutions, and pharmaceutical companies routinely retain
audit logs for ten years or longer due to regulatory requirements. This creates
an attractive ``harvest-now, forge-later'' target: an adversary can exfiltrate
classical evidence logs today and exploit future quantum capabilities to
retroactively forge or repudiate past AI executions.

\subsection{Evidence Record Layout}

Each AI job produces a constant-size evidence record capturing the key attestation
and provenance attributes needed for regulated audit trails.
Table~\ref{tab:ev-layout} summarizes a representative layout used in our industrial
deployment at Codebat Technologies Inc.

\begin{table}[t]
  \centering
  \caption{Representative constant-size evidence layout in our regulated AI
           deployment, following the abstraction from~\cite{kao2025constant}.
           Each field is a $\lambda$-bit hash output (typically $\lambda = 256$).}
  \label{tab:ev-layout}
  \begin{tabular}{@{}llr@{}}
    \toprule
    Field & Description & Size (bits) \\
    \midrule
    $f_0$: \texttt{context}       & Workflow/job identifier commitment        & 256 \\
    $f_1$: \texttt{model}         & Hash of model binary / parameters         & 256 \\
    $f_2$: \texttt{code}          & Hash of container image / code snapshot   & 256 \\
    $f_3$: \texttt{input}         & Hash of input dataset or configuration    & 256 \\
    $f_4$: \texttt{output}        & Hash of output artifacts                  & 256 \\
    $f_5$: \texttt{policy}        & Policy state / IRB identifier commitment  & 256 \\
    $f_6$: \texttt{environment}   & TEE measurement / platform attestation    & 256 \\
    $f_7$: \texttt{link}          & Link to previous evidence or audit chain  & 256 \\
    $\sigma$: \texttt{signature}  & Ed25519 or PQ signature                   & 512--2048 \\
    \midrule
    \multicolumn{2}{l}{Total ($k=8$ fields + signature)}  & 320--512 bytes \\
    \bottomrule
  \end{tabular}
\end{table}

The fixed-size layout ensures that each evidence record occupies the same storage
footprint regardless of the complexity of the underlying AI job, enabling
predictable storage costs and $O(1)$ per-record verification. The
\texttt{signature} field currently uses a classical scheme (e.g., ECDSA or
Ed25519) but is designed to accommodate post-quantum signatures as part of the
migration strategies discussed in Section~\ref{sec:migration}.

\subsection{Implementation and Baseline Performance}

We have implemented the constant-size evidence generator in Rust as part of an
industrial platform for regulated AI deployments~\cite{kao2025constant}. The
implementation uses a standard 256-bit hash function (SHA-256) and an
Edwards-curve-based signature scheme (Ed25519). The number of fields $k$ is
fixed at system setup time; in practice, $k \in [8, 12]$ is sufficient for most
clinical and pharmaceutical workflows.

Table~\ref{tab:perf} summarizes the performance characteristics measured on
commodity server hardware (16-core CPU @ 3.0~GHz, 64~GB RAM, NVMe SSD).
In single-threaded mode, the prototype achieves approximately $3.5 \times 10^4$
evidence records per second with an average per-event latency of $28.4~\mu$s.
Multi-threaded execution across 16 CPU cores yields nearly linear scaling,
reaching $2.8 \times 10^5$ records per second with $5.7~\mu$s average latency.
For batch verification, a mid-range GPU achieves $4.0 \times 10^5$ verifications
per second.\footnote{Performance numbers are drawn from our benchmark study of
constant-size evidence packs~\cite{kao2025constant}. Actual throughput depends
on hardware configuration, signature scheme, and system load.}

\begin{table}[t]
  \centering
  \caption{Evidence generation and verification throughput on commodity hardware.}
  \label{tab:perf}
  \begin{tabular}{@{}lrrr@{}}
    \toprule
    Mode & Events/s & Latency ($\mu$s) & Notes \\
    \midrule
    Generation (single-threaded) & $3.5 \times 10^4$ & 28.4 & $k$ fixed, $\lambda = 256$ \\
    Generation (16 CPU cores)    & $2.8 \times 10^5$ & 5.7  & near-linear scaling \\
    Verification (16 CPU cores)  & $2.5 \times 10^5$ & 6.1  & batch of $10^5$ items \\
    Verification (GPU)           & $4.0 \times 10^5$ & 2.5  & batch of $10^6$ items \\
    \bottomrule
  \end{tabular}
\end{table}

The fixed-size layout of evidence items leads to regular memory access patterns
and uniform control flow, which are favorable for both vectorized CPU execution
and GPU kernels. This $O(1)$ per-event cost is crucial for long-term retention,
as regulated environments are expected to keep audit logs for 10--30~years while
enforcing strict storage and verification latency budgets. A deployment handling
$10^3$ AI jobs per day accumulates on the order of $10^6$--$10^7$ evidence
records over a decade, all of which must remain efficiently verifiable throughout
their retention period.

\section{Security Notions Against Quantum Adversaries}
\label{sec:security}

We now formalize security notions for evidence structures in the presence of quantum
adversaries. Intuitively, we wish to prevent an adversary from forging evidence items
for events that never happened, from equivocating about the content of the log, and
from reusing the same evidence item for multiple conflicting events.

\subsection{System and Threat Model}

We consider a regulated AI deployment where each execution of an AI workload
produces a constant-size evidence record, as described in
Section~\ref{sec:system-context}, and these records are appended to an audit log
that is expected to be stored and queried over many years. The evidence generator
is integrated with the AI platform's scheduling and container orchestration layer,
while one or more auditors rely on the evidence log to reconstruct and verify past
executions during compliance checks and incident response.

Formally, an \emph{evidence generator} uses
$\mathsf{Generate}$ to produce evidence items for events and stores them in an
append-only audit log. An \emph{auditor} uses $\mathsf{Verify}$ and (optionally)
$\mathsf{Link}$ to check evidence for consistency with a claimed workflow.

A quantum adversary $\mathcal{A}$ observes the log and may compromise the
underlying storage or network. Specifically, we allow $\mathcal{A}$ to:
\begin{itemize}[leftmargin=1.5em]
  \item monitor and copy all evidence items produced during the lifetime of the system;
  \item reorder, delete, or inject evidence items in the stored log;
  \item interact with the evidence generator through a public API that exposes
        $\mathsf{Generate}$ for chosen events;
  \item obtain quantum access to the random oracle $H$;
  \item in a \emph{harvest-now, forge-later} scenario, delay its forgery attempt until
        a future time when quantum resources become available.
\end{itemize}

We do not model side-channel attacks on key storage or the internals of the evidence
generator; instead, we assume that the signing key is maintained inside a trusted
environment (e.g., a hardware security module or a trusted execution environment).
Extending our model to include side channels is an interesting direction for future work
(Section~\ref{sec:discussion}).

\subsection{Game-Based Definitions}

We now define security games capturing the three main properties we seek:
Q-Audit Integrity, Q-Non-Equivocation, and Q-Binding.

\paragraph{Q-Audit Integrity.}
Intuitively, Q-Audit Integrity requires that an adversary cannot produce a valid
evidence item for an event that was not output by an honest execution of
$\mathsf{Generate}$.

\begin{definition}[Q-Audit Integrity]
Let $\mathcal{E} = (\mathsf{Setup},\mathsf{KeyGen},\mathsf{Generate},\mathsf{Verify},\mathsf{Link})$
be an evidence structure. Define the experiment
$\Exp^{\mathsf{qaudit}}_{\mathcal{A}}(1^\lambda)$ as follows:
\begin{enumerate}[leftmargin=1.5em]
  \item The challenger runs $\mathsf{Setup}(1^\lambda)$ to obtain $\mathsf{pp}$ and
        $\mathsf{KeyGen}(\mathsf{pp})$ to obtain $(\mathsf{sk},\mathsf{pk})$.
  \item The adversary $\mathcal{A}$ is given $(\mathsf{pp},\mathsf{pk})$ and quantum
        access to the random oracle $H$. It also has classical access to a
        $\mathsf{Generate}$ oracle that, on input $e$, returns
        $\mathsf{Ev} \leftarrow \mathsf{Generate}(\mathsf{pp},\mathsf{sk},e)$ and adds
        $(e,\mathsf{Ev})$ to a set $\mathcal{Q}$ of generated pairs.
  \item Eventually $\mathcal{A}$ outputs a pair $(e^\star,\mathsf{Ev}^\star)$.
  \item The experiment outputs $1$ (meaning $\mathcal{A}$ wins) if and only if
        $\mathsf{Verify}(\mathsf{pp},\mathsf{pk},e^\star,\mathsf{Ev}^\star) = 1$ and
        $(e^\star,\mathsf{Ev}^\star) \notin \mathcal{Q}$.
\end{enumerate}
We define the Q-Audit Integrity advantage of $\mathcal{A}$ as
$\Adv^{\mathsf{qaudit}}_{\mathcal{E}}(\mathcal{A},\lambda)
 = \Pr[\Exp^{\mathsf{qaudit}}_{\mathcal{A}}(1^\lambda) = 1]$.
We say that $\mathcal{E}$ satisfies Q-Audit Integrity if for all QPT adversaries
$\mathcal{A}$, $\Adv^{\mathsf{qaudit}}_{\mathcal{E}}(\mathcal{A},\lambda)$ is negligible
in $\lambda$.
\end{definition}

\paragraph{Q-Non-Equivocation.}
Non-equivocation informally means that the adversary cannot present two mutually
inconsistent views of the evidence log that both appear valid to honest auditors. There
are several ways to formalize this; we adopt a simplified batch-based game.

\begin{definition}[Q-Non-Equivocation]
Fix an ordering of events and let $\mathcal{L}$ denote a log, i.e., a sequence of
pairs $(e_i,\mathsf{Ev}_i)$. We assume that the auditor interprets the $i$-th
position of a log as the $i$-th logical step in the workflow, so conflicting
entries at the same index correspond to equivocation about that step.
Consider the experiment
$\Exp^{\mathsf{qne}}_{\mathcal{A}}(1^\lambda)$:
\begin{enumerate}[leftmargin=1.5em]
  \item The challenger runs $\mathsf{Setup},\mathsf{KeyGen}$ as before and exposes
        $\mathsf{pp},\mathsf{pk}$ to $\mathcal{A}$.
  \item $\mathcal{A}$ has oracle access to $\mathsf{Generate}$ and $H$, and may
        maintain an internal view of a log $\mathcal{L}$ obtained by querying
        $\mathsf{Generate}$ on events of its choice.
  \item Eventually $\mathcal{A}$ outputs two logs
        $\mathcal{L}^{(0)} = \{(e^{(0)}_i,\mathsf{Ev}^{(0)}_i)\}_i$ and
        $\mathcal{L}^{(1)} = \{(e^{(1)}_i,\mathsf{Ev}^{(1)}_i)\}_i$.
  \item The experiment checks:
        \begin{itemize}
          \item for both $b \in \{0,1\}$, all pairs in $\mathcal{L}^{(b)}$ verify under
                $\mathsf{Verify}(\mathsf{pp},\mathsf{pk},\cdot,\cdot)$;
          \item if $\mathsf{Link}$ is defined,
                $\mathsf{Link}(\mathsf{pp},\mathcal{L}^{(0)}) =
                \mathsf{Link}(\mathsf{pp},\mathcal{L}^{(1)})$; and
          \item there exists at least one index $i$ such that
                $(e^{(0)}_i,\mathsf{Ev}^{(0)}_i) \neq (e^{(1)}_i,\mathsf{Ev}^{(1)}_i)$.
        \end{itemize}
  \item The experiment outputs $1$ if all conditions hold.
\end{enumerate}
We define the Q-Non-Equivocation advantage
$\Adv^{\mathsf{qne}}_{\mathcal{E}}(\mathcal{A},\lambda)$ accordingly, and say that
$\mathcal{E}$ satisfies Q-Non-Equivocation if this advantage is negligible for all QPT
adversaries.
\end{definition}

The second and third bullets together capture the notion that the adversary presents
two conflicting but individually valid views of the same workflow: the logs agree on
their compact summary (linkage proof) yet disagree about what actually happened at
some step. Without the linkage condition, the adversary could trivially win by
reordering honestly generated items; the linkage mechanism (e.g., hash chains)
prevents such reordering.

\paragraph{Q-Binding.}
Finally, we capture the requirement that an individual evidence item cannot be reused
to attest to two distinct events.

\begin{definition}[Q-Binding]
Define the experiment $\Exp^{\mathsf{qbind}}_{\mathcal{A}}(1^\lambda)$:
\begin{enumerate}[leftmargin=1.5em]
  \item The challenger runs $\mathsf{Setup},\mathsf{KeyGen}$ and gives
        $(\mathsf{pp},\mathsf{pk})$ and oracle access to $H$ to $\mathcal{A}$.
  \item $\mathcal{A}$ may query a $\mathsf{Generate}$ oracle as before and obtains
        pairs $(e,\mathsf{Ev})$.
  \item Eventually $\mathcal{A}$ outputs a triple
        $(e_0,e_1,\mathsf{Ev}^\star)$ with $e_0 \neq e_1$.
  \item The experiment outputs $1$ if
        $\mathsf{Verify}(\mathsf{pp},\mathsf{pk},e_0,\mathsf{Ev}^\star) =
         \mathsf{Verify}(\mathsf{pp},\mathsf{pk},e_1,\mathsf{Ev}^\star) = 1$.
\end{enumerate}
We define $\Adv^{\mathsf{qbind}}_{\mathcal{E}}(\mathcal{A},\lambda)$ as the success
probability and say that $\mathcal{E}$ satisfies Q-Binding if this advantage is
negligible for all QPT adversaries.
\end{definition}

In the next section we show that a natural hash-and-sign instantiation using a PQ
signature scheme achieves these properties under standard assumptions.

\subsection{Relation to Classical Notions}

In the special case where adversaries are classical and have only classical access to the
hash oracle, our definitions reduce to classical notions of integrity and non-equivocation
for secure logging and transparency systems~\cite{securelogging,ct}. Our experiments
focus on the inability to create new valid evidence, to maintain two conflicting yet valid
views of the log, and to rebind an evidence item. These correspond to the informal goals
discussed in earlier work on constant-size evidence structures in a classical setting
(e.g.,~\cite{kao2025constant}) but here explicitly account for quantum-accessible oracles
and quantum EUF-CMA adversaries.

\section{Post-Quantum Instantiations of Evidence Structures}
\label{sec:instantiation}

We now describe a concrete instantiation of $\mathsf{Generate}$ and $\mathsf{Verify}$
using a hash-and-sign paradigm in the quantum random-oracle model, and analyze its
security with respect to the notions introduced above.

\subsection{Hash-and-Sign Construction in the QROM}

Let $\Sigma = (\mathsf{KeyGen}^{\Sigma},\mathsf{Sign}^{\Sigma},\mathsf{Verify}^{\Sigma})$
be a PQ signature scheme as in Section~\ref{subsec:prelim-quantum}, and let
$H : \{0,1\}^* \rightarrow \{0,1\}^\lambda$ be a hash function modeled as a random
oracle in the QROM.

We assume that for each event $e$ we can deterministically compute
a canonical encoding $\mathsf{Fields}(e)$ (a bitstring in
$\{0,1\}^*$) that concatenates all fields to be
included in the evidence record (e.g., code hash, input digest, platform measurement,
timestamps). The precise layout of these fields follows the constant-size design from
\cite{kao2025constant}; we abstract it here as a bitstring.

Given this setup, we define:
\begin{itemize}[leftmargin=1.5em]
  \item $\mathsf{Setup}$ runs the underlying setup for $\Sigma$ and fixes $H$;
  \item $\mathsf{KeyGen}$ runs $\mathsf{KeyGen}^{\Sigma}$ to produce
        $(\mathsf{sk}^{\Sigma},\mathsf{pk}^{\Sigma})$ and sets
        $(\mathsf{sk},\mathsf{pk}) = (\mathsf{sk}^{\Sigma},\mathsf{pk}^{\Sigma})$;
  \item $\mathsf{Generate}(\mathsf{pp},\mathsf{sk},e)$:
        \begin{enumerate}[leftmargin=1.5em]
          \item Compute $x \leftarrow \mathsf{Fields}(e)$.
          \item Compute $m \leftarrow H(x)$.
          \item Compute $\sigma \leftarrow \mathsf{Sign}^{\Sigma}(\mathsf{sk}^{\Sigma}, m)$.
          \item Output evidence $\mathsf{Ev} = (x,\sigma)$, padded or encoded as a
                constant-size record.
        \end{enumerate}
  \item $\mathsf{Verify}(\mathsf{pp},\mathsf{pk},e,\mathsf{Ev})$:
        \begin{enumerate}[leftmargin=1.5em]
          \item Parse $\mathsf{Ev}$ as $(x,\sigma)$ and check that $x$ is a well-formed
                encoding of $\mathsf{Fields}(e)$. If not, return $0$.
          \item Compute $m \leftarrow H(x)$.
          \item Return
                $\mathsf{Verify}^{\Sigma}(\mathsf{pk}^{\Sigma}, m, \sigma)$.
        \end{enumerate}
\end{itemize}

In practice, $x$ may be compressed to a fixed-length representation by carefully
designing the field layout; for our security analysis we only require that the mapping
$e \mapsto x = \mathsf{Fields}(e)$ is deterministic and collision-resistant, in the
sense that it is infeasible for a QPT adversary to find $e \neq e'$ such that
$\mathsf{Fields}(e) = \mathsf{Fields}(e')$, where $e$ is understood as the full
structured event description (including identifiers, digests, and metadata).

\subsection{Post-Quantum Signature Choices}

Our analysis is agnostic to the concrete choice of $\Sigma$ as long as it satisfies quantum
EUF-CMA security. In practice, the choice of signature scheme significantly affects the
size of $\sigma$ and the performance of $\mathsf{Sign}$ and $\mathsf{Verify}$. For
example, lattice-based signatures typically offer relatively short signatures and fast
verification at the cost of larger public keys, while hash-based schemes offer
conservative security at the cost of larger signatures or statefulness.

From the perspective of constant-size evidence records, the signature size directly
contributes to the overall record size. If the base layout for $x$ is, say, $160$ bytes
(including digests and metadata) and the signature is $128$ bytes, the resulting record
size is $288$ bytes (plus any padding or alignment required), which is in the same range
as prior designs~\cite{kao2025constant}. The throughput impact can be estimated by
benchmarking the chosen $\Sigma$ on representative hardware; our case study in
Section~\ref{sec:case} sketches such an analysis.

\subsection{Hash-Chain Linkage}
\label{subsec:link}

We instantiate the $\mathsf{Link}$ algorithm using a hash-chain construction.
Fix an initialization vector $\ell_0 = 0^\lambda$. For a sequence of evidence
records $(\mathsf{Ev}_1,\dots,\mathsf{Ev}_T)$, define
\[
  \ell_i \;=\; H(\ell_{i-1} \;\|\; \mathsf{Ev}_i), \quad i = 1,\dots,T,
\]
and set $\mathsf{Link}(\mathsf{pp},\mathsf{Ev}_1,\dots,\mathsf{Ev}_T) = \ell_T$.
The chain tip $\ell_T$ serves as a compact summary of the entire log: any
modification, reordering, or omission of evidence items changes $\ell_T$ with
overwhelming probability under the collision resistance of $H$. Auditors compare
chain tips to detect inconsistencies between different views of the same log.

\subsection{Security Theorems}

We now state and prove the security of the hash-and-sign instantiation with
respect to the quantum-adversary notions introduced in
Section~\ref{sec:security}. We provide three theorems corresponding to the
three security properties, each with an explicit advantage bound and a complete
reduction proof.

\begin{theorem}[Q-Audit Integrity]
\label{thm:qaudit}
Let $\mathcal{E}$ be the hash-and-sign evidence structure defined above,
instantiated with a PQ signature scheme $\Sigma$ and hash function $H$ in the
QROM. For any QPT adversary $\mathcal{A}$ making at most $q_s$ queries to
the $\mathsf{Generate}$ oracle and at most $q_H$ quantum queries to $H$, there
exist QPT adversaries $\mathcal{B}_1$ and $\mathcal{B}_2$ such that
\[
  \Adv^{\mathsf{qaudit}}_{\mathcal{E}}(\mathcal{A},\lambda)
  \;\leq\;
  \Adv^{\mathrm{euf\text{-}cma}}_{\Sigma}(\mathcal{B}_1,\lambda)
  \;+\;
  \Adv^{\mathrm{cr}}_{H}(\mathcal{B}_2,\lambda),
\]
where $\mathcal{B}_1$ makes at most $q_s$ signing queries and $q_H$ hash queries,
and $\mathcal{B}_2$ makes at most $q_H$ hash queries.
\end{theorem}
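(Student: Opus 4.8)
The plan is to show that any adversary $\mathcal{A}$ winning the Q-Audit Integrity game yields either a signature forgery or a hash collision, splitting the analysis on whether the message actually signed inside $\mathcal{A}$'s output evidence is fresh from the signing oracle's point of view. Write $\mathsf{Ev}^\star = (x^\star,\sigma^\star)$ for $\mathcal{A}$'s output, set $m^\star = H(x^\star)$, and let $m_1,\dots,m_{q_s}$ with $m_j = H(\mathsf{Fields}(e_j))$ be the messages that the $\mathsf{Generate}$ oracle internally passed to $\mathsf{Sign}^{\Sigma}$ on the queried events $e_1,\dots,e_{q_s}$. Whenever $\mathcal{A}$ wins we have $\mathsf{Verify}^{\Sigma}(\mathsf{pk}^{\Sigma},m^\star,\sigma^\star)=1$, so $(m^\star,\sigma^\star)$ is a valid message-signature pair.

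First I would construct $\mathcal{B}_1$ against quantum EUF-CMA. The reduction receives $\mathsf{pk}^{\Sigma}$ and a classical signing oracle from its challenger, forwards $(\mathsf{pp},\mathsf{pk})$ to $\mathcal{A}$, and relays $\mathcal{A}$'s quantum queries to $H$ directly to the QROM (no programming is required, which is what makes the reduction tight). It answers each $\mathsf{Generate}(e)$ query by computing $x=\mathsf{Fields}(e)$, $m=H(x)$, querying its signing oracle on $m$, and returning $(x,\sigma)$. On $\mathcal{A}$'s output it returns $(m^\star,\sigma^\star)$. This is a valid EUF-CMA forgery exactly when $m^\star \notin \{m_1,\dots,m_{q_s}\}$, so $\mathcal{B}_1$ makes at most $q_s$ signing queries and $q_H$ hash queries and succeeds with probability at least $\Pr[\mathcal{A}\text{ wins} \wedge m^\star \notin \{m_j\}]$.

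Next I would bound the complementary event---$\mathcal{A}$ wins and $m^\star = m_j$ for some $j$---by collision resistance. Here $H(x^\star) = H(\mathsf{Fields}(e_j))$. If $x^\star \neq \mathsf{Fields}(e_j)$, then $(x^\star,\mathsf{Fields}(e_j))$ is an explicit collision, which $\mathcal{B}_2$ outputs after simulating the game exactly as above (again relaying quantum $H$-queries), giving the $\Adv^{\mathrm{cr}}_{H}$ term with at most $q_H$ hash queries. The residual possibility is $x^\star = \mathsf{Fields}(e_j)$: then $\mathsf{Fields}(e^\star) = \mathsf{Fields}(e_j)$, so collision resistance of $\mathsf{Fields}$ forces $e^\star = e_j$, and since $(e^\star,\mathsf{Ev}^\star)\notin\mathcal{Q}$ we must have $\sigma^\star \neq \sigma_j$---a second valid signature on an already-signed message.

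I expect this last sub-case to be the main obstacle, since re-signing a previously signed message breaks strong, not plain, EUF-CMA. I would handle it by observing that the standardized PQ schemes targeted here (ML-DSA, SLH-DSA) are strongly unforgeable, so $(m_j,\sigma^\star)$ is in fact a valid sEUF-CMA forgery caught by $\mathcal{B}_1$ under the mild strengthening of the signature assumption to sEUF-CMA; the form of the advantage bound is unchanged. Summing the forgery event and the collision event across the case split then yields the claimed inequality. A final point of care specific to the QROM is to confirm that $\mathcal{B}_1$ and $\mathcal{B}_2$ perfectly simulate $\mathcal{A}$'s view---in particular that the single random oracle $H$ is shared consistently across the signing simulation and $\mathcal{A}$'s direct quantum queries---so that no distinguishing term or reprogramming factor enters the bound.
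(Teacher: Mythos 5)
Your proposal follows essentially the same route as the paper's proof: the same case split on whether $m^\star = H(x^\star)$ was previously passed to the signing oracle, with the fresh-message case handled by an EUF-CMA reduction $\mathcal{B}_1$ that relays quantum $H$-queries unmodified (no reprogramming, hence a tight, lossless simulation), the repeated-message/different-preimage case handled by an explicit collision extractor $\mathcal{B}_2$, and the residual case pushed via determinism of $\mathsf{Fields}$ down to a repeated signature on an already-signed message. The one substantive divergence is your treatment of that residual sub-case ($x^\star = \mathsf{Fields}(e_j)$, $\sigma^\star \neq \sigma_j$), and there you are in fact \emph{more} careful than the paper: the paper's proof simply asserts that a second valid signature on an already-signed message ``is itself a valid forgery,'' but under the paper's own definition of EUF-CMA (a valid signature on a \emph{new} message) it is not --- it is precisely an sEUF-CMA forgery. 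You correctly identified this as the crux and repaired it by strengthening the signature assumption to strong unforgeability, noting that the intended instantiations (ML-DSA, SLH-DSA) satisfy it; this makes your argument sound where the paper glosses, at the modest cost of a stronger but standard assumption. Two smaller remarks: (i) an alternative repair that keeps plain EUF-CMA is to derandomize or memoize $\mathsf{Generate}$ so that re-querying an event returns the recorded signature (then the sub-case cannot arise), or to strengthen the winning condition to require freshness of $e^\star$ rather than of the pair $(e^\star,\mathsf{Ev}^\star)$; (ii) your step ``collision resistance of $\mathsf{Fields}$ forces $e^\star = e_j$'' --- shared with the paper --- strictly contributes its own negligible advantage term for $\mathsf{Fields}$-collisions that is absorbed into, but not displayed in, the stated bound, unless one assumes the encoding is injective.
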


\begin{proof}
Suppose $\mathcal{A}$ wins
$\Exp^{\mathsf{qaudit}}_{\mathcal{A}}(1^\lambda)$ with non-negligible
advantage $\epsilon(\lambda)$. We construct reductions to the EUF-CMA security
of $\Sigma$ and the collision resistance of $H$.

\medskip
\noindent\textbf{Reduction $\mathcal{B}_1$ (to EUF-CMA of $\Sigma$).}
$\mathcal{B}_1$ receives a challenge public key $\mathsf{pk}^{\Sigma}$ and
access to a signing oracle
$\mathsf{Sign}^{\Sigma}(\mathsf{sk}^{\Sigma},\cdot)$ and the quantum random
oracle $H$. It sets $\mathsf{pk} = \mathsf{pk}^{\Sigma}$ and simulates the
experiment for $\mathcal{A}$:

\begin{itemize}[leftmargin=1.5em]
  \item \emph{Hash queries.} $\mathcal{B}_1$ forwards all of $\mathcal{A}$'s
        quantum hash queries to the QROM oracle $H$.
  \item \emph{$\mathsf{Generate}$ queries.} On input event $e$,
        $\mathcal{B}_1$ computes $x = \mathsf{Fields}(e)$ and $m = H(x)$,
        obtains
        $\sigma \leftarrow \mathsf{Sign}^{\Sigma}(\mathsf{sk}^{\Sigma}, m)$
        from its signing oracle, returns $\mathsf{Ev} = (x,\sigma)$, and
        records $(e,\mathsf{Ev})$ in $\mathcal{Q}$ and $m$ in a set
        $\mathcal{M}$ of signed messages.
\end{itemize}

When $\mathcal{A}$ outputs $(e^\star,\mathsf{Ev}^\star)$ with
$\mathsf{Ev}^\star = (x^\star,\sigma^\star)$ such that
$\mathsf{Verify}(\mathsf{pp},\mathsf{pk},e^\star,\mathsf{Ev}^\star) = 1$ and
$(e^\star,\mathsf{Ev}^\star) \notin \mathcal{Q}$, $\mathcal{B}_1$ computes
$m^\star = H(x^\star)$ and considers two cases:

\begin{enumerate}[leftmargin=1.5em]
  \item \textbf{Case~1}: $m^\star \notin \mathcal{M}$. Then
        $(m^\star,\sigma^\star)$ is a valid signature on a message never
        submitted to the signing oracle, constituting a forgery against
        $\Sigma$. $\mathcal{B}_1$ outputs $(m^\star,\sigma^\star)$.
  \item \textbf{Case~2}: $m^\star \in \mathcal{M}$. Then there exists a
        previously queried event $e$ with
        $H(\mathsf{Fields}(e)) = m^\star = H(x^\star)$. Since
        $(e^\star,\mathsf{Ev}^\star) \notin \mathcal{Q}$, either
        $\mathsf{Fields}(e^\star) \neq \mathsf{Fields}(e)$ (yielding a
        collision in $H$, extracted by $\mathcal{B}_2$) or
        $\mathsf{Fields}(e^\star) = \mathsf{Fields}(e)$ and
        $\sigma^\star \neq \sigma$ for the previously generated signature
        $\sigma$, in which case $(m^\star,\sigma^\star)$ is itself a valid
        forgery (a second valid signature on the same message produced without
        the signing key).
\end{enumerate}

In either case, $\mathcal{A}$'s success translates to an advantage against
$\Sigma$ or $H$:
\[
  \epsilon(\lambda)
  \;\leq\;
  \Adv^{\mathrm{euf\text{-}cma}}_{\Sigma}(\mathcal{B}_1,\lambda)
  \;+\;
  \Adv^{\mathrm{cr}}_{H}(\mathcal{B}_2,\lambda),
\]
both of which are negligible by assumption. The simulation is perfect: all oracle
responses seen by $\mathcal{A}$ are identically distributed to the real
experiment, since $\mathcal{B}_1$ uses the genuine QROM oracle and the genuine
signing oracle. Standard QROM simulation techniques
(cf.~\cite{zhandry2012qrom,bonehzhandry2013qro}) ensure that the reduction
carries through even when $\mathcal{A}$ queries $H$ in quantum superposition.
\end{proof}

\begin{theorem}[Q-Binding]
\label{thm:qbind}
Under the same assumptions as Theorem~\ref{thm:qaudit}, for any QPT adversary
$\mathcal{A}$ against Q-Binding there exists a QPT adversary $\mathcal{B}$
against the collision resistance of $H$ such that
\[
  \Adv^{\mathsf{qbind}}_{\mathcal{E}}(\mathcal{A},\lambda)
  \;\leq\;
  \Adv^{\mathrm{cr}}_{H}(\mathcal{B},\lambda).
\]
\end{theorem}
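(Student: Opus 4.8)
The plan is to reduce directly to the collision resistance of $H$, with no signature-unforgeability term at all, by observing that in the Q-Binding game the single shared evidence item $\mathsf{Ev}^\star$ forces the two claimed events to agree on their hashed field encoding. First I would construct the reduction $\mathcal{B}$: it receives quantum access to the random oracle $H$ from the collision-resistance challenger and simulates $\Exp^{\mathsf{qbind}}_{\mathcal{A}}(1^\lambda)$ internally. Because the target bound contains no $\Adv^{\mathrm{euf\text{-}cma}}_\Sigma$ term, $\mathcal{B}$ is free to run $\mathsf{KeyGen}^\Sigma$ itself and keep the signing key, so it answers every $\mathsf{Generate}$ query honestly and forwards all of $\mathcal{A}$'s (possibly superposed) hash queries to its own $H$ oracle. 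This makes the simulation perfect: every response $\mathcal{A}$ sees is distributed exactly as in the real game.

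Next I would analyze a winning output $(e_0,e_1,\mathsf{Ev}^\star)$ with $e_0 \neq e_1$. Writing $\mathsf{Ev}^\star = (x^\star,\sigma^\star)$, the key observation is that both invocations of $\mathsf{Verify}$ parse the same $x^\star$, hence compute the same $m = H(x^\star)$ and run the identical signature check on $\sigma^\star$; the signature therefore plays no discriminating role, which is exactly why no unforgeability term appears. The only event-dependent test is Step~1 of $\mathsf{Verify}$, the well-formedness check that $x^\star$ encodes $\mathsf{Fields}(e_b)$. Since this encoding is deterministic and injective, $\mathsf{Verify}$ accepting under both $e_0$ and $e_1$ forces $\mathsf{Fields}(e_0) = x^\star = \mathsf{Fields}(e_1)$.

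I would then extract the collision and conclude. With $e_0 \neq e_1$ but $\mathsf{Fields}(e_0) = \mathsf{Fields}(e_1)$, the structured-field encoding has collided; because each field of $\mathsf{Fields}(e)$ is an $H$-digest of the corresponding component of $e$, two distinct events producing identical $\mathsf{Fields}$ outputs must contain some component that differs yet hashes to the same value under $H$. $\mathcal{B}$ locates such a component pair and outputs it, so whenever $\mathcal{A}$ wins, $\mathcal{B}$ finds a collision, giving $\Adv^{\mathsf{qbind}}_{\mathcal{E}}(\mathcal{A},\lambda) \leq \Adv^{\mathrm{cr}}_H(\mathcal{B},\lambda)$ as claimed.

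The main obstacle is precisely this last extraction step: connecting a $\mathsf{Fields}$-collision to a genuine $H$-collision. In the fully abstract view where $\mathsf{Fields}$ is merely assumed collision-resistant, one obtains only a $\mathsf{Fields}$-collision, not an $H$-collision, so I must lean on the concrete construction in which $\mathsf{Fields}$ is a concatenation of per-component hash outputs (or, in the simplest modeling, $\mathsf{Fields}(e) = H(e)$, in which case $e_0$ and $e_1$ are themselves the colliding preimages). I would also need to confirm that the Step~1 well-formedness check is genuinely injective—otherwise a single $x^\star$ could satisfy it for two events with distinct $\mathsf{Fields}$ values—and that the QROM simulation machinery invoked for Theorem~\ref{thm:qaudit} likewise licenses forwarding superposition queries in this reduction.
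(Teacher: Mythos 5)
Your proposal is correct and follows essentially the same route as the paper's proof: a direct, signature-free reduction to the collision resistance of $H$, where double acceptance of $\mathsf{Ev}^\star=(x^\star,\sigma^\star)$ pins $x^\star=\mathsf{Fields}(e_0)=\mathsf{Fields}(e_1)$, and the concrete per-field layout (each field an $H$-digest of an event component, as in Table~\ref{tab:ev-layout}) yields a deterministic extraction of colliding preimages $\phi_j(e_0)\neq\phi_j(e_1)$ with $H(\phi_j(e_0))=H(\phi_j(e_1))$. The obstacle you flag---that abstract collision resistance of $\mathsf{Fields}$ alone gives only a $\mathsf{Fields}$-collision, so the argument must invoke the concrete per-component hashing---is exactly the step the paper also resolves by appeal to that layout, so your caveat matches rather than diverges from its reasoning.
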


\begin{proof}
Suppose $\mathcal{A}$ outputs $(e_0,e_1,\mathsf{Ev}^\star)$ with
$e_0 \neq e_1$. Write $\mathsf{Ev}^\star = (x^\star,\sigma^\star)$ and
assume $\mathsf{Verify}$ accepts $(e_b,\mathsf{Ev}^\star)$ for both
$b \in \{0,1\}$.

By the verification algorithm (Section~\ref{subsec:prelim-evidence}),
$x^\star$ must equal both $\mathsf{Fields}(e_0)$ and $\mathsf{Fields}(e_1)$.
Since $\mathsf{Fields}$ is deterministic, this requires
$\mathsf{Fields}(e_0) = \mathsf{Fields}(e_1)$. Because each event
description $e$ includes identifiers, timestamps, and artifact digests that
are individually hashed to produce the fields of $\mathsf{Fields}(e)$
(see Table~\ref{tab:ev-layout}), having
$\mathsf{Fields}(e_0) = \mathsf{Fields}(e_1)$ with $e_0 \neq e_1$ implies
that there exists at least one field index $j$ where the underlying data
differs but the hash values coincide:
$H(\phi_j(e_0)) = H(\phi_j(e_1))$ with
$\phi_j(e_0) \neq \phi_j(e_1)$, where $\phi_j$ extracts the $j$-th
component from the event description. This constitutes a collision in $H$.

$\mathcal{B}$ simulates the full experiment for $\mathcal{A}$ and, upon
receiving the output $(e_0,e_1,\mathsf{Ev}^\star)$, iterates over the $k$
field positions to find the index $j$ where a collision occurs, outputting
$(\phi_j(e_0),\phi_j(e_1))$. The extraction is deterministic, so the
reduction is tight.
\end{proof}

\begin{theorem}[Q-Non-Equivocation]
\label{thm:qne}
Let $\mathcal{E}$ be the hash-and-sign evidence structure with hash-chain
linkage as defined in Section~\ref{subsec:link}. For any QPT adversary
$\mathcal{A}$ against Q-Non-Equivocation there exists a QPT adversary
$\mathcal{B}$ against the collision resistance of $H$ such that
\[
  \Adv^{\mathsf{qne}}_{\mathcal{E}}(\mathcal{A},\lambda)
  \;\leq\;
  \Adv^{\mathrm{cr}}_{H}(\mathcal{B},\lambda).
\]
\end{theorem}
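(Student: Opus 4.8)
The plan is to reduce Q-Non-Equivocation to collision resistance of $H$ by exploiting the hash-chain linkage condition. Recall that for the adversary to win, it must output two logs $\mathcal{L}^{(0)}$ and $\mathcal{L}^{(1)}$ that (i) both verify entry-by-entry, (ii) share the same chain tip $\mathsf{Link}(\mathsf{pp},\mathcal{L}^{(0)}) = \mathsf{Link}(\mathsf{pp},\mathcal{L}^{(1)})$, yet (iii) differ in at least one position. First I would have $\mathcal{B}$ simulate the entire Q-Non-Equivocation experiment for $\mathcal{A}$ faithfully, forwarding quantum hash queries to the genuine QROM oracle and answering $\mathsf{Generate}$ queries honestly (here $\mathcal{B}$ can itself run $\mathsf{KeyGen}^{\Sigma}$ and hold the signing key, since no signature forgery is needed). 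When $\mathcal{A}$ halts with the two winning logs, $\mathcal{B}$ extracts a collision from the coincidence of their chain tips.

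The core of the argument is a backward walk along the two hash chains. Write $\ell^{(b)}_0 = 0^\lambda$ and $\ell^{(b)}_i = H(\ell^{(b)}_{i-1} \,\|\, \mathsf{Ev}^{(b)}_i)$ for $b \in \{0,1\}$, so the tips satisfy $\ell^{(0)}_T = \ell^{(1)}_T$ (assuming equal length $T$; the unequal-length case is handled separately below). The key observation is that if the two chains agree at the tip but disagree somewhere, then there must be an index $i$ where the chains first diverge reading from the end: a largest $i$ such that $\bigl(\ell^{(0)}_{i-1} \,\|\, \mathsf{Ev}^{(0)}_i\bigr) \neq \bigl(\ell^{(1)}_{i-1} \,\|\, \mathsf{Ev}^{(1)}_i\bigr)$ while $\ell^{(0)}_i = \ell^{(1)}_i$. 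At that index the two distinct preimages hash to the same value, i.e. $H\bigl(\ell^{(0)}_{i-1} \,\|\, \mathsf{Ev}^{(0)}_i\bigr) = \ell^{(0)}_i = \ell^{(1)}_i = H\bigl(\ell^{(1)}_{i-1} \,\|\, \mathsf{Ev}^{(1)}_i\bigr)$, yielding a collision that $\mathcal{B}$ outputs. To justify that such an index exists I would argue contrapositively: if no such divergence existed, then agreement of $\ell_i$ at every step would force $\mathsf{Ev}^{(0)}_i = \mathsf{Ev}^{(1)}_i$ and $\ell^{(0)}_{i-1} = \ell^{(1)}_{i-1}$ for all $i$ by induction downward from the tip, contradicting condition (iii) that the logs differ at some position.

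The main obstacle I anticipate is handling the case where the two logs have different lengths, say $T_0 \neq T_1$, since then the chains cannot be aligned index-by-index and the batch-game definition does not fix a common $T$. I would address this by treating the shorter chain's tip as an intermediate value of the longer chain: if $\ell^{(0)}_{T_0} = \ell^{(1)}_{T_1}$ with $T_0 < T_1$, then the final hash step of the longer chain produces a value equal to $\ell^{(0)}_{T_0}$, and walking backward from this coincidence again forces either a direct collision or eventual equality back to the common IV $0^\lambda$; reaching $0^\lambda$ from a nonempty additional chain segment would itself require $H$ to output a fixed target, which by collision resistance (together with the uniform-output structure of the QROM) happens only negligibly often. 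A cleaner alternative, which I would adopt if the length-mismatch bookkeeping becomes delicate, is to prepend an explicit length-encoding or to assume the auditor's index semantics fix $T_0 = T_1$, as suggested by the phrasing ``the $i$-th position of a log as the $i$-th logical step,'' thereby restricting attention to equal-length logs and isolating the clean collision-extraction argument above.

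Throughout, I would emphasize that $\mathcal{B}$'s simulation is perfect and that collision extraction is purely classical post-processing of $\mathcal{A}$'s final output: the quantum hash queries pose no difficulty for the reduction, since $\mathcal{B}$ relays them to the real QROM and only inspects the classical transcript $\mathcal{L}^{(0)}, \mathcal{L}^{(1)}$ at the end. Hence the reduction is tight and every successful $\mathcal{A}$ yields a collision-finder $\mathcal{B}$ with $\Adv^{\mathsf{qne}}_{\mathcal{E}}(\mathcal{A},\lambda) \le \Adv^{\mathrm{cr}}_{H}(\mathcal{B},\lambda)$, as claimed.
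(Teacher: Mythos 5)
Your proposal is correct and follows essentially the same route as the paper's proof: a perfectly simulating reduction $\mathcal{B}$ that relays quantum hash queries to the genuine QROM oracle, recomputes both chains from $\mathcal{A}$'s classical output, and walks backward from the matching tips to the largest index where the hash inputs differ while the outputs agree, extracting a collision deterministically --- this is exactly the paper's equal-length argument, including the downward induction that forces $\ell^{(0)}_{j^\star} = \ell^{(1)}_{j^\star}$ at the first divergence. The one place you genuinely diverge is the unequal-length case, and there you are more careful than the paper: the paper dismisses $n_0 \neq n_1$ by asserting that a chain cycle ``yields a collision in $H$,'' but that cycle argument, pushed to its end by the same backward walk, bottoms out where yours does --- at a nonempty chain segment whose final hash equals the IV $0^\lambda$, i.e., at finding a \emph{preimage} of a fixed target, which does not follow from collision resistance as an abstract assumption. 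You correctly flag this and offer two repairs (a QROM preimage bound, which holds for a random oracle but would add a separate negligible term to the stated inequality; or encoding the length / fixing equal-length logs via the auditor's index semantics, which restores the clean bound $\Adv^{\mathsf{qne}}_{\mathcal{E}}(\mathcal{A},\lambda) \leq \Adv^{\mathrm{cr}}_{H}(\mathcal{B},\lambda)$ exactly). In short: same decomposition and same key collision-extraction lemma, with a more honest treatment of the corner case that the paper's own proof glosses over.
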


\begin{proof}
Suppose $\mathcal{A}$ outputs two logs
$\mathcal{L}^{(0)} = \{(e^{(0)}_i,\mathsf{Ev}^{(0)}_i)\}_{i=1}^{n_0}$ and
$\mathcal{L}^{(1)} = \{(e^{(1)}_i,\mathsf{Ev}^{(1)}_i)\}_{i=1}^{n_1}$ that
both pass item-wise verification, produce matching chain tips
$\ell^{(0)}_{n_0} = \ell^{(1)}_{n_1}$, and differ at some position.

We construct $\mathcal{B}$ as follows. $\mathcal{B}$ simulates the full
experiment for $\mathcal{A}$, forwarding hash and signing queries as in
Theorem~\ref{thm:qaudit}. When $\mathcal{A}$ outputs the two logs,
$\mathcal{B}$ computes both hash chains:
\[
  \ell^{(b)}_i = H(\ell^{(b)}_{i-1} \;\|\; \mathsf{Ev}^{(b)}_i), \quad
  b \in \{0,1\},\; i = 1,\dots,n_b,
\]
starting from $\ell^{(b)}_0 = 0^\lambda$.

\medskip
\noindent\textbf{Case $n_0 \neq n_1$.}
Without loss of generality, assume $n_0 < n_1$. Since
$\ell^{(0)}_{n_0} = \ell^{(1)}_{n_1}$ but $n_0 < n_1$, we have
$\ell^{(1)}_{n_0} \neq \ell^{(1)}_{n_1}$ in general (the chain continues
for additional steps in log~(1)). If $\ell^{(0)}_{n_0} = \ell^{(1)}_{n_0}$,
then $\ell^{(1)}_{n_0}$ eventually maps to the same value
$\ell^{(1)}_{n_1} = \ell^{(0)}_{n_0} = \ell^{(1)}_{n_0}$ through $n_1 - n_0$
additional hash steps, producing a cycle in the hash chain that yields a
collision in $H$. If $\ell^{(0)}_{n_0} \neq \ell^{(1)}_{n_0}$, then we are in
the equal-length case for the prefix of length $n_0$ and proceed as below.

\medskip
\noindent\textbf{Case $n_0 = n_1 = n$.}
Both chains have the same length and produce the same tip:
$\ell^{(0)}_n = \ell^{(1)}_n$. $\mathcal{B}$ walks backward from position $n$.
Let $j^\star$ be the largest index $j \in \{1,\dots,n\}$ such that
$\mathsf{Ev}^{(0)}_j \neq \mathsf{Ev}^{(1)}_j$ or
$\ell^{(0)}_{j-1} \neq \ell^{(1)}_{j-1}$.
Such an index exists because the logs differ at some position.

For all positions $i > j^\star$, we have
$\mathsf{Ev}^{(0)}_i = \mathsf{Ev}^{(1)}_i$. By the maximality of $j^\star$,
we also have $\ell^{(0)}_{j^\star} = \ell^{(1)}_{j^\star}$: this follows by
downward induction from position $n$, since
$\ell^{(b)}_i = H(\ell^{(b)}_{i-1} \| \mathsf{Ev}^{(b)}_i)$ and for
$i > j^\star$ both inputs agree, so the outputs agree. Therefore
\[
  H\!\bigl(\ell^{(0)}_{j^\star - 1} \;\|\; \mathsf{Ev}^{(0)}_{j^\star}\bigr)
  \;=\; \ell^{(0)}_{j^\star}
  \;=\; \ell^{(1)}_{j^\star}
  \;=\; H\!\bigl(\ell^{(1)}_{j^\star - 1} \;\|\; \mathsf{Ev}^{(1)}_{j^\star}\bigr),
\]
while by definition of $j^\star$ the inputs differ:
$(\ell^{(0)}_{j^\star-1} \| \mathsf{Ev}^{(0)}_{j^\star}) \neq
(\ell^{(1)}_{j^\star-1} \| \mathsf{Ev}^{(1)}_{j^\star})$.
$\mathcal{B}$ outputs this pair as a collision in $H$.

\medskip
The collision extraction is deterministic (no guessing is required), so the
reduction is tight:
$\Adv^{\mathsf{qne}}_{\mathcal{E}}(\mathcal{A},\lambda) \leq
\Adv^{\mathrm{cr}}_{H}(\mathcal{B},\lambda)$.
\end{proof}

\section{Quantum-Safe Migration of Deployed Evidence Logs}
\label{sec:migration}

We now turn from the design of new evidence structures to the question of how to
\emph{migrate} already-deployed evidence logs towards quantum safety. We assume the
existence of a system that has been using a classical signature scheme to protect
constant-size evidence records and that must remain auditable for a long period (e.g.,
10--30 years) in the face of emerging quantum capabilities.

\subsection{Threat Scenarios: Harvest-Now, Forge-Later}

A particularly relevant threat scenario for audit logs is the
\emph{harvest-now, forge-later} model: an adversary records all publicly visible evidence
items today and stores them. Once a sufficiently powerful quantum computer becomes
available, the adversary attempts to:
\begin{itemize}[leftmargin=1.5em]
  \item break the classical signature scheme (e.g., via Shor's algorithm) to produce
        valid signatures on arbitrary messages;
  \item or use the ability to forge signatures to create false evidence records or to
        alter existing ones in the stored log;
  \item or present two conflicting views of the evidence log to different auditors.
\end{itemize}

In this scenario, the adversary does not need to compromise the system in real time:
passive eavesdropping and later cryptanalysis suffice. This distinguishes audit
migration from other settings (e.g., TLS) where the focus is on protecting present and
future sessions rather than long-lived, highly replicable logs. In contrast to PQ-TLS
migration, which focuses on protecting future sessions~\cite{alnahawi2024pqtls}, audit
trails must preserve the evidentiary value of \emph{past} executions.

If no migration is performed, any evidence record that relies solely on classical
signatures may lose its probative value once the relevant scheme is broken: a skeptical
auditor could no longer trust that a signature was produced at the claimed time or by
the claimed key holder. Our goal is therefore to design migration strategies that
\emph{augment} or \emph{wrap} existing evidence with post-quantum protections in a
way that preserves their evidentiary meaning.

\subsection{Migration Patterns}
\label{subsec:migration-patterns}

We discuss three migration patterns, each suitable for different operational and
regulatory constraints.

\paragraph{Pattern 1: Hybrid signatures for new evidence.}
The simplest migration pattern is to augment all \emph{new} evidence records with a
post-quantum signature in addition to the existing classical one. Concretely, for each
new event $e$ we generate:
\[
  \mathsf{Ev}^{\mathrm{hyb}} = (x, \sigma^{\mathrm{class}}, \sigma^{\mathrm{pq}}),
\]
where $x = \mathsf{Fields}(e)$, $\sigma^{\mathrm{class}}$ is a signature under the existing
classical scheme (e.g., ECDSA), and $\sigma^{\mathrm{pq}}$ is a signature under a PQ
scheme $\Sigma$. Verification checks both signatures.

Security-wise, as long as either signature scheme remains secure, an adversary cannot
forge a full hybrid evidence item. In particular, once the classical scheme is broken by
quantum computers, the PQ signature continues to provide integrity and binding. The
main cost is increased record size and verification time due to the additional
signature; in systems where records are already small (e.g., a few hundred bytes), the
overhead may be acceptable.

\paragraph{Pattern 2: Re-signing legacy evidence.}
Hybrid signatures do not protect evidence generated \emph{before} a PQ scheme is
deployed. For legacy records, one option is to \emph{re-sign} them with a PQ key inside
a trusted environment. Let $\{\mathsf{Ev}_i\}$ denote legacy evidence records
containing classical signatures. The system operator can:
\begin{enumerate}[leftmargin=1.5em]
  \item deploy a trusted re-signing service (e.g., in a hardware security module or TEE)
        with a PQ signing key $\mathsf{sk}^{\mathrm{pq}}$;
  \item for each legacy record $\mathsf{Ev}_i$, compute a digest
        $d_i = H'(\mathsf{Ev}_i)$ under a quantum-resistant hash $H'$;
  \item compute a PQ signature $\tau_i =
        \mathsf{Sign}^{\Sigma}(\mathsf{sk}^{\mathrm{pq}}, d_i)$;
  \item store the pair $(\mathsf{Ev}_i,\tau_i)$ as the migrated version of the record.
\end{enumerate}

Auditors can then verify both the original classical evidence and its PQ wrapper
by checking
$\mathsf{Verify}^{\Sigma}(\mathsf{pk}^{\mathrm{pq}},
H'(\mathsf{Ev}_i), \tau_i)$. The
trust assumption is that the re-signing process is performed honestly and that the
original log was unmodified at the time of re-signing. This process effectively
\emph{extends} the lifetime of legacy evidence by anchoring it to a PQ signature scheme
in a single step.

\paragraph{Pattern 3: Merkle-root anchoring for legacy batches.}
When the number of legacy evidence items is very large, re-signing each item
individually may be computationally expensive. An alternative is to anchor batches of
legacy records via Merkle trees. For a batch $B = \{\mathsf{Ev}_1,\dots,\mathsf{Ev}_n\}$,
the system:
\begin{enumerate}[leftmargin=1.5em]
  \item builds a Merkle tree $T_B$ where the leaves are $h_i = H'(\mathsf{Ev}_i)$;
  \item computes the root $r_B$ of $T_B$;
  \item generates a PQ signature $\tau_B =
        \mathsf{Sign}^{\Sigma}(\mathsf{sk}^{\mathrm{pq}}, r_B)$;
  \item stores $\tau_B$ as the PQ anchor for batch $B$.
\end{enumerate}

To verify a legacy record $\mathsf{Ev}_i$ in $B$, an auditor obtains a Merkle proof
$\pi_i$ that $\mathsf{Ev}_i$ is included in $T_B$, verifies the proof against $r_B$, and
verifies the PQ signature on $r_B$. This pattern amortizes the cost of PQ signatures
over many records; the trade-off is the need to manage Merkle proofs and batch
boundaries.

\subsection{Complexity and Storage Analysis}

We now sketch the rough complexity of these patterns in terms of the number of
legacy records $N$, the size of a PQ signature $|\sigma^{\mathrm{pq}}|$, and the batch
size $b$ for Merkle anchoring.

\paragraph{Hybrid signatures.}
For new evidence only, the cost is one additional PQ signature per record. If we
generate $N_{\mathrm{new}}$ evidence items over the remaining lifetime of the system,
the total additional signing cost is $N_{\mathrm{new}}$ PQ signatures; storage overhead is
$N_{\mathrm{new}} \cdot |\sigma^{\mathrm{pq}}|$ bytes.

\paragraph{Re-signing legacy evidence.}
Re-signing all $N$ legacy records individually requires $N$ PQ signatures and adds
$N \cdot |\tau_i|$ bytes of storage, where $|\tau_i|$ is the size of the PQ signature
(and possibly some metadata). The process is embarrassingly parallel and can be
spread over time, but for very large $N$ this may still be expensive.

\paragraph{Merkle-root anchoring.}
If we partition $N$ records into batches of size $b$, we obtain $\lceil N / b \rceil$
roots and therefore need only $\lceil N / b \rceil$ PQ signatures. The storage
overhead for signatures is thus $\lceil N / b \rceil \cdot |\tau_B|$. Each record also
requires storing or reconstructing a Merkle proof of size $O(\log b)$, e.g., a few
dozen hashes for typical batch sizes. The one-time computation cost is dominated by
hashing all records ($N$ hashes) and signing $\lceil N / b \rceil$ roots.

Table~\ref{tab:migration} summarizes the trade-offs qualitatively.

\begin{table}[t]
  \centering
  \begin{tabular}{@{}lccc@{}}
    \toprule
    Pattern & PQ signatures & Per-record overhead & Trust assumptions \\
    \midrule
    Hybrid (new only)      & $N_{\mathrm{new}}$ & $|\sigma^{\mathrm{pq}}|$ & Same as existing system \\
    Re-sign legacy         & $N$               & $|\tau_i|$               & Trusted re-signing process \\
    Merkle-root anchoring  & $\lceil N/b \rceil$ & $O(\log b)$ hashes       & Trusted tree construction \\
    \bottomrule
  \end{tabular}
  \caption{Qualitative comparison of migration patterns for $N$ legacy evidence records.}
  \label{tab:migration}
\end{table}

In practice, combinations of these patterns may be used: for example, Merkle-root
anchoring for very old archival logs and full re-signing for recent, higher-value
records. Recent system designs that combine PQC and blockchain for audit-grade file
transfer~\cite{sola2025qrfile} follow a complementary pattern, anchoring events in an
immutable ledger rather than constant-size evidence records.

\section{Case Study: A Constant-Size Evidence System in Regulated AI}
\label{sec:case}

To illustrate how the above concepts translate into practice, we present a case study
based on a constant-size evidence system implemented in an industrial platform for
regulated AI workloads at Codebat Technologies Inc. The platform targets clinical and
pharmaceutical workflows where each AI job (e.g., a risk prediction or trial cohort
selection) is executed inside a hardened containerized environment and produces a
compact evidence record following the layout in Table~\ref{tab:ev-layout}.

\subsection{Deployment Overview}

In a typical deployment, the platform orchestrates AI jobs across a cluster of
container-based trusted execution environments (TEEs). For each job, the runtime
collects the attestation and provenance attributes described in
Section~\ref{sec:system-context} and produces an evidence record of 320--512~bytes
(depending on signature scheme).

In current deployments, the signature scheme is classical (Ed25519) while the rest
of the design (constant-size layout, hash-based linkage) is agnostic to the choice of
signature. The audit log may grow to $10^6$--$10^8$ evidence records over several
years, depending on workload volume. With a baseline evidence generation throughput
of $2.8 \times 10^5$ records per second on a 16-core server (see Table~\ref{tab:perf}),
the system can handle sustained workloads of thousands of AI jobs per second while
maintaining real-time evidence logging.

\subsection{Applying Migration Patterns}

We now consider how the migration patterns from
Section~\ref{subsec:migration-patterns} apply in this setting.

\paragraph{Hybrid signatures for new evidence.}
Once a PQ signature scheme is available in the platform's cryptographic module, new
jobs can produce hybrid evidence items $(x,\sigma^{\mathrm{class}},\sigma^{\mathrm{pq}})$
without changing the upstream workflow. The primary engineering task is to extend
the evidence generator and verifier to handle an additional signature field. Because
evidence records are already compact (320--512~bytes as in Table~\ref{tab:ev-layout}),
adding a PQ signature (e.g., Dilithium at $\approx 2{,}400$ bytes or SPHINCS+ at
$\approx 7{,}800$ bytes) increases the record size to 3--8~KB, which remains
acceptable for most log storage systems given the strong long-term security
guarantees.

\paragraph{Post-quantum signature performance.}
The choice of PQ signature scheme directly affects evidence record size and
migration throughput. Table~\ref{tab:pq-perf} compares the classical baseline
(Ed25519) with three PQ candidates standardized by NIST~\cite{nistpqc}: ML-DSA
(FIPS~204, lattice-based, formerly CRYSTALS-Dilithium) at two security levels,
and SLH-DSA (FIPS~205, hash-based, formerly SPHINCS+) in its small-signature
variant. Performance figures are single-core reference-implementation estimates
on commodity server hardware (3.0~GHz x86-64), drawn from
recent benchmarks~\cite{raavi2025pqdsa,tan2022pqsignsurvey}.

\begin{table}[t]
  \centering
  \caption{Post-quantum signature scheme comparison for evidence record migration.
           Evidence size = $k \times 32$ bytes (fields) + signature size.
           Performance figures are approximate single-core values from reference
           implementations~\cite{raavi2025pqdsa}.}
  \label{tab:pq-perf}
  \begin{tabular}{@{}lrrrrr@{}}
    \toprule
    Scheme & $|\sigma|$ (B) & $|\mathsf{pk}|$ (B) & Sign/s & Verify/s & Ev size \\
    \midrule
    Ed25519 (baseline)   &     64 &    32 & 35{,}000 & 12{,}000 & 320~B \\
    ML-DSA-65 (Level~3)  &  3{,}309 & 1{,}952 &  5{,}000 & 12{,}000 & 3.6~KB \\
    ML-DSA-87 (Level~5)  &  4{,}627 & 2{,}592 &  3{,}000 &  8{,}000 & 4.9~KB \\
    SLH-DSA-128s (Level~1) & 7{,}856 &    32 &      5 &    200 & 8.1~KB \\
    \bottomrule
  \end{tabular}
\end{table}

ML-DSA-65 offers the best balance for evidence migration: its signing throughput of
$\approx 5{,}000$ operations per second per core is sufficient for re-signing large
corpora, while the $3.3$~KB signature keeps evidence records under $4$~KB. SLH-DSA
provides the most conservative security assumptions (pure hash-based) but is
significantly slower for signing, making it better suited for anchoring small batches
rather than re-signing individual records.

\paragraph{Re-signing recent legacy evidence.}
For recent evidence, especially those corresponding to higher-risk workflows (e.g.,
clinical trial analyses), the platform can re-sign individual records using a PQ
signature key inside a TEE. Using ML-DSA-65 at $\approx 5{,}000$ signatures per
second per core, re-signing $10^7$ records would take on the order of
2{,}000 seconds ($\approx 34$ minutes) on a single core, or under 3 minutes when
parallelized across 16 cores. These estimates suggest that re-signing at this scale
is operationally feasible, even accounting for the $\approx 7\times$ slowdown
compared to classical Ed25519 signatures.

\paragraph{Merkle-root anchoring for archival logs.}
For older archival logs, where individual records are rarely accessed but must remain
provably intact, Merkle-root anchoring can reduce migration cost. For example, with
batch size $b = 2^{12} = 4{,}096$, a corpus of $N = 10^8$ records yields
$\lceil N/b \rceil \approx 24{,}415$ roots. At a PQ signing throughput of $5{,}000$
signatures per second, computing all anchors would take under 5 seconds of CPU
time; the dominant cost lies in computing the $N$ hash values for the leaves, which is
linear in the number of records and can be parallelized. Merkle proofs of depth
$\log_2 b = 12$ hashes add modest per-record overhead.

\subsection{Lessons for Regulated AI Systems}

This case study highlights several lessons:
\begin{itemize}[leftmargin=1.5em]
  \item Designing evidence records to be \emph{cryptographically modular}---i.e., not
        hard-wiring a specific signature scheme into their layout---greatly simplifies
        PQ migration.
  \item For realistic log volumes in regulated AI workloads, the computational cost of
        PQ migration is moderate compared to typical batch processing tasks.
  \item Hybrid and batch anchoring strategies enable operators to prioritize which
        portions of the log receive the strongest PQ protection, aligning cryptographic
        investment with regulatory and business risk.
\end{itemize}

While the concrete numbers above are illustrative rather than definitive, they suggest
that the migration patterns we propose can be implemented without fundamental
scalability obstacles.

\section{Related Work}
\label{sec:related}

\paragraph{Forensics and audit logging.}
A rich body of work studies secure logging, accountability mechanisms, and
digital forensics for regulated environments. Classical secure audit
logs~\cite{securelogging} and forward-secure logging
schemes~\cite{forwardsecurelogging} focus on protecting log integrity against
adversaries who may compromise the logging system after the fact. Foundational
work on cryptographic timestamping~\cite{haberstornetta1991} and authenticated
data structures~\cite{merkle1989certified} provides building blocks for
tamper-evident logs. More recent work explores accountability and audit trails
in cloud and distributed settings, including blockchain-based approaches for
immutable record-keeping~\cite{sola2025qrfile,khan2025pqblockchainaudit}.
Our work complements these approaches by treating constant-size evidence
structures as a first-class cryptographic object, with explicit security
definitions against quantum adversaries and migration patterns tailored to
already-deployed regulated AI systems.

Table~\ref{tab:comparison} provides a systematic comparison of our work with
representative systems from each category.

\begin{table}[t]
  \centering
  \caption{Feature comparison with related work on secure logging and
           post-quantum audit systems. Symbols: \checkmark\ = supported,
           $\circ$ = partial, -- = not addressed.}
  \label{tab:comparison}
  \begin{tabular}{@{}lccccc@{}}
    \toprule
    & \rotatebox{60}{Constant-size} & \rotatebox{60}{PQ signatures}
    & \rotatebox{60}{QROM analysis} & \rotatebox{60}{Formal proofs}
    & \rotatebox{60}{Migration} \\
    \midrule
    CT~\cite{ct}                     & --         & --         & --         & $\circ$    & -- \\
    Somarapu~\cite{somarapu2022qrlogs} & --       & \checkmark & --         & --         & -- \\
    Khan et al.~\cite{khan2025pqblockchainaudit} & -- & \checkmark & --     & --         & -- \\
    Sola-Thomas~\cite{sola2025qrfile} & --         & \checkmark & --         & --         & -- \\
    Yusuf~\cite{yusuf2025qrlogintegrity} & --      & \checkmark & --         & --         & -- \\
    \textbf{This work}               & \checkmark & \checkmark & \checkmark & \checkmark & \checkmark \\
    \bottomrule
  \end{tabular}
\end{table}

\paragraph{Transparency logs.}
Transparency logs such as Certificate Transparency~\cite{ct} and Revocation
Transparency~\cite{transparency} add public append-only structures and gossip
protocols to detect misbehavior by certificate authorities and other trusted
parties. These designs have inspired similar transparency mechanisms in other
domains, including software supply chain and binary transparency. Our evidence
structures share the goal of providing tamper-evident audit trails, but focus
on per-execution attestation in regulated AI settings rather than public
certificate ecosystems.

\paragraph{Post-quantum cryptography.}
Post-quantum cryptography has been extensively studied and is now being
standardized through efforts such as the NIST PQC
project~\cite{bernstein2009pqcbook,nistpqc}. Signature schemes based on
lattices, hash functions, and other post-quantum assumptions are reaching
deployment maturity~\cite{tan2022pqsignsurvey,raavi2025pqdsa}, with growing
support in mainstream cryptographic libraries~\cite{ahmed2025pqclibs}.
Several works have begun to explore quantum-resilient logging and auditing
in domain-specific settings, such as distributed log
storage~\cite{somarapu2022qrlogs}, cloud multimedia
auditing~\cite{khan2025pqblockchainaudit}, telecom logs for
5G/6G~\cite{yusuf2025qrlogintegrity}, and post-quantum TLS
migration~\cite{alnahawi2024pqtls}. Our work provides a unified treatment of
quantum-adversary security notions for constant-size evidence structures
and concrete migration patterns for regulated AI audit trails.

\section{Discussion and Open Problems}
\label{sec:discussion}

Our work raises several broader questions about designing quantum-safe audit trails.

\paragraph{Choice of PQ primitives.}
Different PQ signature schemes offer different trade-offs among key size, signature
size, signing and verification speed, and implementation maturity. Our analysis treats
$\Sigma$ as an abstract EUF-CMA-secure primitive, but deploying such schemes in
evidence systems requires careful engineering and benchmarking. In particular,
regulators may favor conservative, hash-based schemes despite larger signatures, while
performance-sensitive environments may prefer lattice-based schemes.

\paragraph{Policy and retention considerations.}
Cryptographic migration is only one part of a broader policy question: for how long
must evidence be retained, at what integrity level, and under which key rotation
schedule? For example, a regulator may consider it acceptable to rely on classical
signatures for low-risk workflows with short retention periods, while demanding PQ
protection for long-lived, high-impact logs. Formalizing such policies and aligning them
with technical migration patterns is an interesting interdisciplinary challenge.

\paragraph{Beyond integrity: confidentiality and privacy.}
In this paper we focus on integrity, non-repudiation, and binding. Many regulated AI
settings also require protecting the confidentiality of certain log fields and respecting
privacy constraints. Extending evidence structures to include post-quantum encryption
(e.g., via KEM+AEAD constructions) and differential privacy mechanisms, while
maintaining verifiability, is a promising direction.

\paragraph{Stronger quantum security models.}
Our security model accounts for quantum access to hash oracles and EUF-CMA
adversaries against signatures. In more advanced settings, adversaries may have
coherent access to additional system APIs, or evidence structures may be embedded
in protocols that themselves involve quantum communication. Developing stronger
compositional security frameworks for such scenarios remains an open problem.

\paragraph{Richer linkage mechanisms.}
Our Q-Non-Equivocation proof (Theorem~\ref{thm:qne}) covers hash-chain linkage.
Extending this to richer structures such as Merkle-based transparency logs,
skip-list constructions, or DAG-based provenance graphs under quantum adversaries
is an interesting direction. Such extensions would enable quantum-safe
non-equivocation guarantees for a broader class of audit trail architectures
beyond linear chains.

\paragraph{Side channels and implementation security.}
We do not consider side-channel attacks or implementation bugs that leak signing
keys or allow evidence tampering. While these issues are orthogonal to PQ migration,
they are critical in practice. Combining post-quantum cryptography with formally
verified implementations and side-channel-resistant TEEs would further strengthen
audit trails.

\section*{Data and Code Availability}

The prototype implementation used for the microbenchmarks in
Table~\ref{tab:perf} and for the case study in Section~\ref{sec:case}
is part of an industrial platform developed by Codebat Technologies Inc.\
and is currently involved in ongoing patent applications. Due to commercial
and contractual constraints, the source code and deployment configuration
cannot be released at this time. The algorithms, system architecture, and
parameter choices described in Sections~\ref{sec:system-context}
and~\ref{sec:migration} are sufficient to reproduce the reported complexity
trends and to re-implement compatible evidence generators and migration
services on comparable hardware. If and when the relevant patents are granted
and contractual constraints permit, we plan to release a simplified reference
implementation for research and interoperability purposes.

\section{Conclusion}
\label{sec:conclusion}

We have studied evidence structures for regulated AI audit trails in the presence of
quantum adversaries. Building on an abstraction of constant-size evidence records, we
introduced game-based notions of Q-Audit Integrity, Q-Non-Equivocation, and
Q-Binding, and proved via tight reductions that a hash-and-sign instantiation in the
quantum random-oracle model satisfies all three properties under standard assumptions
(collision-resistant hashing and quantum EUF-CMA signatures). We then proposed
migration patterns for deployed evidence logs---hybrid signatures, re-signing of
legacy records, and Merkle-root anchoring---and examined their trade-offs with
concrete post-quantum signature benchmarks.

Our case study, based on an industrial constant-size evidence platform for regulated
AI at Codebat Technologies Inc., suggests that quantum-safe audit trails are
practically achievable with moderate computational and storage overhead. We hope
that our definitions and patterns provide a useful foundation for both researchers and
practitioners designing quantum-resilient compliance systems.

Future work includes extending our analysis to richer linkage mechanisms (e.g.,
Merkle-based transparency logs and DAG-based provenance graphs), integrating
confidentiality and privacy requirements via post-quantum encryption, developing
composable security frameworks for evidence structures embedded in larger protocols,
and collaborating with regulators to align quantum-safe audit designs with evolving
standards.

\bibliographystyle{abbrv}
\bibliography{refs}

\end{document}